\newtheorem{thm}{Theorem}
\newtheorem{lem}{Lemma}
\newtheorem{coro}{Corollary}
\newenvironment{proof}{\noindent{\it Proof.~~}}{\medskip}
\def\eqnarray{\stepcounter{equation}\let\@currentlabel=\theequation
\global\@eqnswtrue
\global\@eqcnt\z@\tabskip\@centering\let\\=\@eqncr
$$\halign to \displaywidth\bgroup\@eqnsel\hskip\@centering
  $\displaystyle\tabskip\z@{##}$&\global\@eqcnt\@ne 
  \hfil$\;{##}\;$\hfil
  &\global\@eqcnt\tw@ $\displaystyle\tabskip\z@{##}$\hfil 
   \tabskip\@centering&\llap{##}\tabskip\z@\cr}
\def\Left#1#2\Right{\begingroup%
   \def\ts@r{\nulldelimiterspace=0pt \mathsurround=0pt}%
   \let\@hat=#1%
   \def\sht@im{#2}%
   \def\@t{{\mathchoice{\def\@fen{\displaystyle}\k@fel}%
          {\def\@fen{\textstyle}\k@fel}%
          {\def\@fen{\scriptstyle}\k@fel}%
          {\def\@fen{\scriptscriptstyle}\k@fel}}}%
   \def\g@rin{\ts@r\left\@hat\vphantom{\sht@im}\right.}%
   \def\k@fel{\setbox0=\hbox{$\@fen\g@rin$}\hbox{%
      $\@fen \kern.3875\wd0 \copy0 \kern-.3875\wd0%
      \llap{\copy0}\kern.3875\wd0$}}%
      \def\pt@h{\mathopen\@t}\pt@h\sht@im%
      \Right}%
\def\Right#1{\let\@hat=#1%
   \def\st@m{\mathclose\@t}%
   \st@m\endgroup}
\DeclareRobustCommand\widecheck[1]{{\mathpalette\@widecheck{#1}}}
\def\@widecheck#1#2{%
    \setbox\z@\hbox{\m@th$#1#2$}%
    \setbox\tw@\hbox{\m@th$#1%
       \widehat{%
          \vrule\@width\z@\@height\ht\z@
          \vrule\@height\z@\@width\wd\z@}$}%
    \dp\tw@-\ht\z@
    \@tempdima\ht\z@ \advance\@tempdima2\ht\tw@ \divide\@tempdima\thr@@
    \setbox\tw@\hbox{%
       \raise\@tempdima\hbox{\scalebox{1}[-1]{\lower\@tempdima\box
\tw@}}}%
    {\ooalign{\box\tw@ \cr \box\z@}}}
\newcommand{\wt}{\widetilde}
\newcommand{\qed}{\hspace*{\fill}$\Box$}
\newcommand{\EE}{\mathsf{E}}
\newcommand{\VV}{\mathsf{V}}
\newcommand{\PP}{\mathsf{P}}
\newcommand{\bbK}{\mathbb{K}}
\newcommand{\bbN}{\mathbb{N}}
\newcommand{\bbR}{\mathbb{R}}
\newcommand{\bbZ}{\mathbb{Z}}
\newcommand{\rd}{{\rm d}}
\newcommand{\re}{{\rm e}}
\def\BibTeX{{\rm B\kern-.05em{\sc i\kern-.025em b}\kern-.08em
    T\kern-.1667em\lower.7ex\hbox{E}\kern-.125emX}}
\begin{document}
%\title{Conference Paper Title*\\
%{\footnotesize \textsuperscript{*}Note: Sub-titles are not captured in Xplore and
%should not be used}
%\thanks{Identify applicable funding agency here. If none, delete this.}
%}
%
%\author{\IEEEauthorblockN{Moeko Yajima}
%\IEEEauthorblockA{\textit{Department of Mathematical and Computing Science} \\
%\textit{Tokyo Institute of Technology}\\
%Tokyo, Japan \\
%yajima.m.ad@m.titech.ac.jp}
%}
%
%\maketitle

%============================================================
%============================================================
%============================================================

\title{Correlation Coefficient Analysis of the \\Age of Information in Multi-Source Systems}

\author{\IEEEauthorblockN{Yukang Jiang$^*$, Kiichi Tokuyama$^*$, Yuichiro Wada$^\dagger$, Moeko Yajima$^*$}
\\
\IEEEauthorblockA{$^*$ Tokyo Institute of Technology, jykjapan008@gmail.com, \{tokuyama.k.aa, yajima.m.ad\}@m.titech.ac.jp}
\IEEEauthorblockA{$^\dagger$ FUJITSU LABORATORIES LTD. / RIKEN AIP, wada.yuichiro@fujitsu.com}
%\IEEEauthorblockA{\textit{Department of Mathematical and Computing Science} \\
%\textit{Tokyo Institute of Technology}\\
%Tokyo, Japan \\
%yajima.m.ad@m.titech.ac.jp}
}

%\author{\IEEEauthorblockN{}
%\IEEEauthorblockA{}}
%\and
%\IEEEauthorblockN{Naoto Miyoshi}
%\IEEEauthorblockA{\textit{Tokyo Institute of Technology} \\
%Tokyo, Japan \\
%miyoshi@is.titech.ac.jp}
%}

\maketitle

\begin{abstract}

This paper studies the age of information (AoI) on an information updating system such that multiple sources share one server to process packets of updated information.
%In such systems, contention of the packets from different sources occurs on the server, and the packets from each source may suffer from being interrupted, backlogged, and becoming outdated.
In such systems, packets from different sources compete for the server, and thus they may  suffer from being interrupted, being backlogged, and becoming stale.
%Thus, the AoIs of the different sources have a correlation, and such a correlation needs rigorous observation to grasp the relationship.
Therefore, in order to grasp structures of such systems, it is crucially important to study a metric indicating a correlation of different sources.
In this paper, we aim to analyze the correlation of AoIs on a single-server queueing system with multiple sources. As our contribution, we provide 
the closed-form expression of the \emph{correlation coefficient} of the AoIs. To this end, we first derive the Laplace-Stieltjes transform of the stationary distribution of each AoI for the multiple sources. Some nontrivial properties on the systems are revealed from our analysis results.

\end{abstract}

\begin{IEEEkeywords}
Age of information, queueing theory, single-server queues, multiple sources, correlation coefficient, stationary distribution.
\end{IEEEkeywords}

\section{Introduction}

In recent years, we can see the real-time information updating systems in many places because of the ever-increasing demand of controlling time-critical information throughout network systems.
%Currently, technology of real-time information updating is supported in various scenes of our life, brought by the need to control time-crucial information through network systems.
The typical examples are monitoring systems of weather reports, vehicular status update systems that assist self-driving of cars, remote controlling of construction machinery, etc.
In such systems, various kinds of status are displayed on equipped monitors (e.g., temperature, humidity, and air pressure in weather reports; position, velocity, and acceleration in vehicular status).
When updated status of one kind is captured by a sensor, a packet is generated by the associated source of an updated information.
Thereafter, it is processed by one of servers, and then displayed on the corresponding monitor.
In addition, while all the servers are busy with processing, a newly arriving packet is backlogged and becomes outdated. 
Note that these situations occur in practice if the arrival frequency of packets is beyond the processing power of servers. 
%However, when all the servers are busy with processing, a newly arrived packet is backlogged and becomes outdated; such situations occur particularly when the resources of the servers are limited, because contention among the packets from different sources is likely to happen.
Owing to the above properties of the systems, the information displayed on the monitor is not always up-to-date. Therefore, the freshness of the displayed information should be quantified and managed for those information updating systems.

From such backgrounds, a performance metric called the age of information (AoI) was proposed \cite{KaulYatesGrut12}. To define the AoI, let $\eta(t)$ denote the timestamp of the generation time of the information displayed on the monitor at time $t$. Then, the AoI at time $t$ is defined by
\begin{equation}\nonumber
\Delta_t:=t-\eta(t).
\end{equation} 
The AoI defined by $\Delta_t$ can indicate the freshness, because the above expression means the elapsed time from the generation of the information displayed on the monitor.  
%We note that the AoI is associated with one information source; multiple AoIs are defined in the system which handles multiple information sources. 
We note that, in a system with multiple sources, the AoI is defined for each source.

The AoI in queueing systems have been studied in recent years. We here focus on previous studies which investigated the AoIs in queueing systems with multiple information sources.
Yates and Kaul provided the pioneering study in \cite{YatesKaul12}. They considered single-server queueing systems where two sources share one server to process updated information, and derived the userful expression of the average AoI of each information source. They also studied the first-come-first-served (FCFS) M/M/1 system as a special case, and derived the closed-form expression of the average AoI of each source.
%Note that there is no queue in M/M/1/1 systems, therefore arriving packets are discarded without entering the system if there is a packet in service (or they immediately enter the service by pushing  out the packet in service). 
Kaul and Yates \cite{KaulYates12} showed that two-source M/M/1/1 systems with preemption outperform two-source FCFS M/M/1 systems in terms of the average AoI, which triggered further studies for such without-queue systems. Najm and Telatar \cite{NajmTela18} derived the average AoI and the peak AoI of each source in two-source M/G/1 systems with preemption. The systems with general multiple information sources were investigated by Yates and Kaul \cite{YatesKaul18}. They considered the average AoI on M/M/1/1 and  M/M/1/2 systems, in both of which multiple sources share one server. In case of M/M/1/2, they utilized stochastic hybrid systems to discard waiting packets, and then they successfully reduced complexity of their analysis.

As described above, most of the works handling multiple information sources have been devoted to analyzing the average AoI of each source. To the best of our knowledge, no previous works try analytical studies about the correlation of the AoIs from different information sources. 
%In addition to the average AoI of the individual sources, the metric indicating the correlation is crucially important to understand structures of such systems, because the relationship among different sources sharing servers should be grasped for better management of the information updating.
In order to grasp structures of the information updating systems with the multiple sources, it is crucially important to study a metric indicating a correlation of different sources in addition to the AoIs of the individual sources.
Considering  the correlation leads to  a better management of an information updating system with multiple sources.

In this paper, we study the correlation of  AoIs of  a status updating system with multiple sources, which is modeled by an M/M/1/1 queueing system with preemption. Our model is assumed to consist of multiple information sources and the corresponding monitors and to share one server. Our model described above is defined in detail in Section~\ref{sec:model}.

The contribution of this paper is as follows. We first derive the {\it Laplace-Stieltjes transform} (LST) of the stationary distribution of each AoI in our model. The LST enables to obtain the mean and the variance of the AoI.  Next, assuming that the number of sources is two, we provide the closed-form expression of the {\it correlation coefficient} of the AoIs, which is the main contribution of this paper. Furthermore, using this result, we reveal some nontrivial properties on our model.

%In this paper, we tackle on this topic and provide analytical investigations about the correlation of the two AoIs, under the status updating systems consisting of multiple information sources and its corresponding monitors, any of which share a server to process information. The flow of updating in each information source is modeled by the M/M/1/1 queueing system, which is described in the next section.
%As the main result, we first derive the {\it Laplace-Stieltjes transform} (LST) of the stationary distribution of each AoI. Here we consider general $K$~($\geq1$) sources sharing a server. The LST enables to obtain the mean and the variance of the AoI. Then, assuming $K=2$, we provide the closed form expression of {\it correlation coefficient} of the two distributions of the AoIs. The covariance of the two distributions, the most challenging part of the analysis, is rigorously described in our analysis.

The rest of this paper is organized as follows. In Section~\ref{sec:model}, we describe the model of our investigating information updating systems. Sections~\ref{sec:analysis1} and~\ref{sec:analysis2} present analysis results. In Section~\ref{sec:analysis1}, we derive the LST for each AoI. In Section~\ref{sec:analysis2}, assuming that the number of sources is two, we obtain the correlation coefficient of the AoIs. Numerical experiments are conducted in Section~\ref{sec:evaluation}. Finally, this paper is concluded in Section~\ref{sec:conclusion}.
%The analysis of the LST for each AoI of $K$ sources, and the correlation coefficient analysis of the AoIs of the two sources are respectively provided in Section~\ref{sec:analysis1} and Section~\ref{sec:analysis2}. 
%In Section~\ref{sec:evaluation}, we conduct numerical experiments and confirm several behaviors of the correlation coefficient observed from our analysis. 
%
\medskip

%============================================================
%============================================================
%============================================================

\section{System Model}\label{sec:model}

We consider an information updating system such that multiple sources generate information packets of updated status. Generated packets are immediately transmitted to an M/M/1/1 queueing system, which is illustrated in Fig.~\ref{fig:model}. Note that an M/M/1/1 queueing system has only one server and no buffer space.
%status, and immediately transmit them to a single-server queueing system with no buffer space, which is illustrated in Fig.~\ref{fig:model}. 
After being processed in the server, the packets are directly sent to monitors, and then the monitors display the updated information. 
Each source has the corresponding monitor, and updated information of a source is displayed on its corresponding monitor.
The number of sources is denoted as $K\in\bbN$, and $\bbK:=\{1,2,\dots,K\}$ denotes the set of type indexes of sources. 

Packets are generated from source $k$ ($k\in\bbK$) according to the time-homogeneous Poisson process with rate $\lambda_k$. Service times of packets are assumed to be independent and identically distributed (i.i.d.) with the exponential distribution having mean $1/\mu$; that is, packets from all sources have the same service time distribution.
%Each source is associated with its corresponding monitor, and the number of the sources are generally denoted by $K\in\bbN:=\{1,2,\dots\}$. Note that all the $K$ sources share one server, as illustrated in Figure~\ref{fig:model}, and that there is no buffer space.
%The following M/M/1/1 queueing model is adopted in our system;
%packets are generated from source $k$ ($k=1,2,\cdots,K$) according to the time-homogeneous Poisson process with rate $\lambda_k$, and service times of the packets are independent and identically distributed (i.i.d.) with an exponential distribution. All the sources have the same service rate for its generating packets, and the service rate is denoted as $\mu$.
Besides, preemption is assumed in our system; that is, the packet which currently occupies the server will be pushed out if a new packet arrives before its service completion. 
We refer to a packet which completes its service without being pushed out as the {\it valid packet}.
%The packets which complete their services without being pushed out are called {\it valid packets}.
Henceforth, we refer to generation times of packets as arrival times. In addition, we define $\lambda:=\sum_{k\in\bbK}\lambda_k$ as the total arrival rate of the $K$ sources.
%we regard the generation times of packets as the arriving times of packets. 
%We define $\bbK:=\{1,2,\cdots,K\}$ as the set of type indexes of sources and $\lambda:=\sum_{k\in\bbK}\lambda_k$ as the total arrival rate of the $K$ sources.
 
\begin{figure}[!t]
\begin{center}
\includegraphics[clip,width=.8\linewidth]{./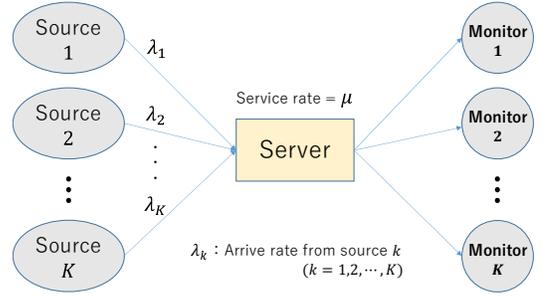}
\end{center}
\caption{Our investigating system: a packet is generated from one of the $K$ sources, processed by the server, and then displayed on the corresponding monitor. The flow is modeled according to M/M/1/1 queueing model with preemption.}
\label{fig:model}
\end{figure}
\medskip

%============================================================
%============================================================
%============================================================

\section{AoI for each source}\label{sec:analysis1}

In this section, we derive the LST of stationary AoI of each source. For $k\in\bbK$ and $n\in\bbZ:=\{0,\pm 1,\pm2,\dots\}$, let $\alpha_{k,n}$ denote the $n$th arrival time of the packet of source $k$, and $S_{k,n}$ denote the service time of the packet of source $k$ which arrives at $\alpha_{k,n}$. We define $A_k(t)$ as the AoI of source $k\in\bbK$ at time $t\in\bbR$. Using these notations, we have the following expression, for $k\in\bbK$ and $t\in\bbR$.
\[
A_k(t)
=t-\max_{n\in\bbZ}\left\{\alpha_{k,n};\alpha_{k,n+1}-\alpha_{k,n}>S_{k,n},t>\beta_{k,n}\right\},
\]
where $\beta_{k,n}:= \alpha_{k,n}+S_{k,n}$. Samples paths of $A_k(t)$ are illustrated in Figs.~\ref{fig:AoI1} and~\ref{fig:AoI2}. $\{(\alpha_{k,n},S_{k,n})\}_{n\in\bbZ}$ is the stationary and ergodic marked point process. Thus, we define $A_k$, $k\in\bbK$, as the random variable following the stationary distribution of $\{A_k(t)\}_{t\in\bbR}$. 

In addition, we define some notations related to valid packets. For $k\in\bbK$ and $n\in\bbZ$, we define $\alpha^*_{k,n}$ and  $S_{k,n}^*$ as the  arrival time and service time of the $n$th valid packet of source $k$, respectively. We also define $\beta^*_{k,n}$ as  the $n$th  departure times of the valid packet; that is, $\beta^*_{k,n}=\alpha_{k,n}^*+S_{k,n}^*$. Without loss of generality, we assume that
\[
\cdots<\beta^*_{k,0}\le 0<\beta^*_{k,1}<\beta^*_{k,2}<\cdots.
\]
The service time distribution of valid packets is obtained as follows
\begin{lem}
\label{lem:S}
The service time of a valid packet of source $k\in\bbK$ follows the exponential distribution having mean $1/(\lambda+\mu)$.
\end{lem}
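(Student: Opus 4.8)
The plan is to tag an arbitrary packet of source $k$ at the instant it enters service and to characterize the distribution of its service requirement conditioned on the event that it turns out to be valid. First I would observe that, under the preemptive M/M/1/1 discipline, every arriving packet seizes the server immediately (pushing out whatever is currently in service), so the tagged packet begins service exactly at its own arrival epoch. Let $S$ denote its service requirement; by assumption $S$ is exponential with mean $1/\mu$. The packet is valid precisely when it completes service before the next arrival of any source occurs, since any such arrival would preempt it.

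Next I would identify the distribution of the time $T$ from the start of the tagged packet's service until the next arrival of any packet. Because the superposition of the $K$ independent Poisson streams is itself a Poisson process with rate $\lambda=\sum_{k\in\bbK}\lambda_k$, and because service begins exactly at an arrival epoch, the memorylessness of the Poisson process gives that $T$ is exponentially distributed with rate $\lambda$ and is independent of $S$. The validity event is then exactly $\{S<T\}$.

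Finally I would compute the conditional law of $S$ given $\{S<T\}$. Using independence of $S$ and $T$,
\[
\PP(S\in\rd s,\,S<T)=\mu\re^{-\mu s}\,\PP(T>s)\,\rd s=\mu\re^{-(\lambda+\mu)s}\,\rd s,
\]
so that $\PP(S<T)=\mu/(\lambda+\mu)$, and dividing yields the conditional density $(\lambda+\mu)\re^{-(\lambda+\mu)s}$ for $s\ge 0$. This is the exponential distribution with mean $1/(\lambda+\mu)$, the claimed law of the service time of a valid packet; note that the source index $k$ never enters the computation, consistent with the statement holding for every $k\in\bbK$.

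The step I expect to require the most care is the second one: justifying that the residual time to the next preempting arrival is exactly exponential with rate $\lambda$ and independent of $S$. This rests on the fact that service starts at a genuine arrival epoch of the aggregated Poisson process, so that the forward recurrence time coincides with a fresh interarrival time; one must also check that conditioning the tagged packet to be of type $k$ does not bias this residual time, which again follows from the independence of the source labeling (a Poisson thinning) from the interarrival clock.
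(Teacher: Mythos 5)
Your proposal is correct and follows essentially the same route as the paper: both condition the exponential service time $S\sim\mathrm{Exp}(\mu)$ on the event that it beats an independent $\mathrm{Exp}(\lambda)$ time to the next arrival of the aggregated Poisson stream, yielding the conditional law $\mathrm{Exp}(\lambda+\mu)$. Your version simply spells out the joint-density computation and the memorylessness/thinning justification that the paper leaves implicit.
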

\begin{proof}
Let  $X_a$ denote a random variable  following the exponential  distribution  with mean $1/a$ for $a>0$. Note that packets arrive according to the Poisson process with rate $\lambda$ if we ignore  types of sources. In addition, a packet is valid if no other packets arrive until its service is completed. Thus, it follows that, for $k\in\bbK$ and $n\in\bbZ$,
\[
\PP(S^*_{k,n}> x)=\PP(X_\mu> x|X_\lambda>X_\mu)
=\re^{-(\lambda+\mu)x},\quad x\ge 0.
\]
\qed
\end{proof}

\begin{figure}[!t]
\begin{center}
\includegraphics[clip,width=.94\linewidth]{./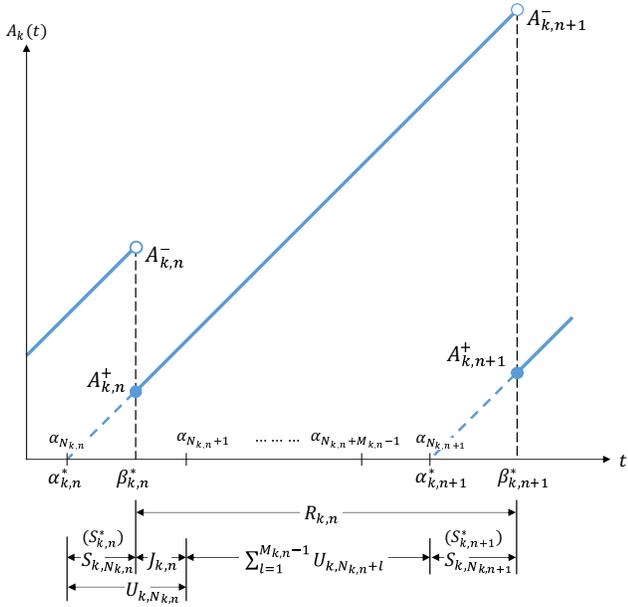}
\end{center}
\caption{A sample path of the AoI process $A_k(t)$ ($k\in\bbK$).}
\label{fig:AoI1}
\end{figure}

Using this lemma, we obtain the LST of each source.
\begin{thm}
\label{thm:moment}
The Laplace-Stieltjes transform of the stationary AoI of source $k\in\bbK$, denoted by $\wt{A}_k(s)$, is given by
\[
\wt{A}_k(s)={\lambda_k\mu\over (s+\lambda)(s+\mu)-(\lambda-\lambda_k)\mu},\qquad s\ge 0.
\]
\end{thm}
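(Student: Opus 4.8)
The plan is to exploit the lack-of-memory property to show that the epochs $\{\beta^*_{k,n}\}$ at which valid source-$k$ packets depart form a renewal process: immediately after each such departure the server is empty, so the future evolution restarts independently of the past. On $[\beta^*_{k,n},\beta^*_{k,n+1})$ the age of source $k$ is the sawtooth $A_k(t)=S^*_{k,n}+(t-\beta^*_{k,n})$, since it drops to the service time $S^*_{k,n}$ of the freshly delivered packet and then rises with unit slope. Writing $Y_k:=\beta^*_{k,n+1}-\beta^*_{k,n}$ for the inter-departure time and applying the renewal--reward theorem with reward $\re^{-sA_k(t)}$, I would obtain
\[
\wt{A}_k(s)=\frac{\EE\!\left[\int_0^{Y_k}\re^{-s(S^*_{k,n}+u)}\,\rd u\right]}{\EE[Y_k]}=\frac{\wt{S}^*_k(s)\,\bigl(1-\wt{Y}_k(s)\bigr)}{s\,\EE[Y_k]},
\]
where the factorization uses that the drop height $S^*_{k,n}$ is independent of the subsequent interval $Y_k$, again by memorylessness from the empty state.

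The drop height is supplied by Lemma~\ref{lem:S}: $S^*_k$ is exponential with rate $\lambda+\mu$, so $\wt{S}^*_k(s)=(\lambda+\mu)/(s+\lambda+\mu)$. The remaining work is $\wt{Y}_k(s)$ and $\EE[Y_k]$. Here I would decompose the evolution between valid departures into i.i.d.\ \emph{rounds}, each an idle period $I$ (exponential, rate $\lambda$) followed by a busy period $B$ ending at the next valid departure of \emph{any} source; by Poisson splitting that departure belongs to source $k$ independently with probability $\lambda_k/\lambda$. Thinning this renewal stream makes $Y_k$ a geometric$(\lambda_k/\lambda)$ compound of the round length $T=I+B$, giving $\EE[Y_k]=(\lambda/\lambda_k)\EE[T]$ and $\wt{Y}_k(s)=\tfrac{(\lambda_k/\lambda)\wt{T}(s)}{1-(1-\lambda_k/\lambda)\wt{T}(s)}$ with $\wt{T}(s)=\tfrac{\lambda}{s+\lambda}\wt{B}(s)$.

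The key computational step is the busy-period transform $\wt{B}(s)$. Conditioning on the first packet's fate, with probability $\mu/(\lambda+\mu)$ it is valid and $B$ is a single slot $\min(\text{service},\text{interarrival})$ that is exponential$(\lambda+\mu)$, while with probability $\lambda/(\lambda+\mu)$ it is preempted after such a slot and an independent busy period restarts; this yields a fixed-point equation whose solution simplifies remarkably to $\wt{B}(s)=\mu/(s+\mu)$, i.e.\ the busy period is itself exponential with rate $\mu$. Substituting everything and using $(s+\lambda)(s+\mu)-\lambda\mu=s(s+\lambda+\mu)$ in the numerator of $1-\wt{Y}_k(s)$ then collapses the expression to the claimed form. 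I expect the main obstacle to be rigor rather than algebra: carefully justifying the renewal and independence structure together with the Poisson-splitting of the source marks, and verifying the fixed-point derivation of $\wt{B}(s)$; once these are secured the final simplification is routine.
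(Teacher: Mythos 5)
Your proposal is correct and arrives at the right transform, but it gets there by a genuinely different route in the key computation. The skeleton is the same: your renewal--reward identity $\wt{A}_k(s)=\wt{S}^*_k(s)\bigl(1-\wt{Y}_k(s)\bigr)/(s\,\EE[Y_k])$ is exactly the paper's invocation of the general formula \eqref{eq:20200308-4} from \cite{inou19} once one substitutes $\wt{A}_k^+(s)=\wt{S}^*_k(s)$, $\wt{A}_k^-(s)=\wt{A}_k^+(s)\wt{R}_k(s)$, and $\lambda_k^*=1/\EE[R_{k,0}]$; you simply derive it from first principles rather than citing the reference. Where you diverge is in computing the inter-update LST: the paper (Appendix~\ref{app:lem:R}) decomposes $R_{k,n}$ along the source-$k$ arrival epochs --- a residual interarrival time, a geometric number of source-$k$ interarrival times each conditioned on the packet being invalid, and the final valid service time --- which requires the somewhat delicate conditional LST \eqref{eq:20202029-6-2}. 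You instead decompose into i.i.d.\ idle-plus-busy rounds ending at valid departures of \emph{any} source, observe via the fixed-point equation that the busy period is exponential with rate $\mu$ (so a round has LST $\frac{\lambda}{s+\lambda}\frac{\mu}{s+\mu}$, which is the paper's Lemma~\ref{lem:R}, proved there only for the two-source correlation analysis), and then thin geometrically by the source mark with probability $\lambda_k/\lambda$. Both yield \eqref{eq:20200308-6}; your version is arguably cleaner because the validity indicator and the source mark are each independent of the slot lengths, avoiding conditional interarrival distributions, and it unifies the per-source result of Theorem~\ref{thm:moment} with the any-source structure the paper develops separately. One point to tighten: the cycle reward in your renewal--reward step depends on $S^*_{k,n}$, which is determined inside the \emph{previous} cycle, so the (reward, length) pairs are stationary and one-dependent rather than i.i.d.; you need the Palm-inversion/ergodic version of the theorem (as the paper uses for Theorem~\ref{thm:coefficient}) rather than the elementary i.i.d.\ form, but this is exactly the rigor issue you flagged and it does not affect the result.
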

\begin{proof}
We define $A_{k,n}^+$  (resp. $A_{k,n}^-$) as the AoI of the immediately after (resp. before) the $n$th update of source $k$. Let  $\wt{A}_k^+(s)$ (resp. $\wt{A}_k^-(s)$) denote the LSTs of the stationary distributions of $\{A_{k,n}^+\}$ (resp. $\{A_{k,n}^-\}$). It follows from \cite{inou19} that, for $k\in\bbK$ and $s\ge0$,
\begin{equation}
\wt{A}_k(s)=\lambda_k^*{\wt{A}_k^+(s)-\wt{A}_k^-(s)\over s},
\label{eq:20200308-4}
\end{equation}
where $\lambda_k^*$ denotes the arrival rate of valid packets of source $k$.

Note here that $A_{k,n}^+$ is equivalent to the service time of the valid packet arriving at $\alpha_{k,n}^*$. Thus, from Lemma~\ref{lem:S}, we obtain 
\begin{equation}
\wt{A}_k^+(s)={\lambda+\mu\over s+\lambda+\mu }.
\label{eq:20200308-3}
\end{equation}
We  define $R_{k,n}$ as the interval time of $n$th and $(n+1)$st updates of source $k$; that is, $R_{k,n}=\beta^*_{k,n+1}-\beta^*_{k,n}$.  
The following relation holds for  $k\in\bbK$ and $n\in\bbZ$.
\begin{equation}
{A}_{k,n}^-={A}_{k,n-1}^++R_{k,n-1}.
\label{eq:20200308-5}
\end{equation}
We also have
\begin{equation}
\EE[\re^{-sR_{k,n}}]={\lambda_k\mu\over(s+\lambda)(s+\mu)-(\lambda-\lambda_k)\mu},
\label{eq:20200308-6}
\end{equation}
which is shown in Appendix~\ref{app:lem:R}.
Applying \eqref{eq:20200308-3} and \eqref{eq:20200308-6} to \eqref{eq:20200308-5} yields
\begin{equation}
\wt{A}_k^-(s)
={\lambda+\mu\over s+\lambda+\mu}{\lambda_k\mu\over(s+\lambda)(s+\mu)-(\lambda-\lambda_k)\mu}.
\label{eq:20200308-1}
\end{equation}
Furthermore, a packet of source $k\in\bbK$ is valid with probability $\mu/(\lambda+\mu)$, independently other packets. We then have
\begin{equation}
\lambda^*_k={\lambda_k\mu\over\lambda+\mu}.
\label{eq:20200308-2}
\end{equation}
Consequently, applying \eqref{eq:20200308-3}, \eqref{eq:20200308-1}, and \eqref{eq:20200308-2} to \eqref{eq:20200308-4}, we obtain Theorem~\ref{thm:moment}.\qed
\end{proof}

Furthermore, using Theorem~\ref{thm:moment}, we can easily obtain the expectation and variance of each AoI.
\begin{coro}
\label{coro:moment}
For $k\in\bbK$, we have
\[
\EE[A_k]
=
{1\over \lambda_k}\left\{1+{\lambda\over\mu}\right\},~~
\VV[A_k]
=
{1\over \lambda_k^2}\left\{1+2{\lambda-\lambda_k\over\mu}+{\lambda^2\over\mu^2}\right\}.
\]
\end{coro}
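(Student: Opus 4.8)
The plan is to obtain both moments directly from the Laplace-Stieltjes transform $\wt{A}_k(s)$ given in Theorem~\ref{thm:moment} by using the standard moment-generating identities $\EE[A_k]=-\wt{A}_k'(0)$ and $\EE[A_k^2]=\wt{A}_k''(0)$, after which the variance follows from $\VV[A_k]=\EE[A_k^2]-(\EE[A_k])^2$. First I would write the transform in the normalized form $\wt{A}_k(s)=N(s)/D(s)$ with numerator $N(s)=\lambda_k\mu$ (a constant) and denominator $D(s)=(s+\lambda)(s+\mu)-(\lambda-\lambda_k)\mu$, and verify the sanity check $\wt{A}_k(0)=1$, i.e.\ that $D(0)=\lambda_k\mu$; indeed $D(0)=\lambda\mu-(\lambda-\lambda_k)\mu=\lambda_k\mu$, which confirms the expression is a genuine LST of a probability distribution and fixes the leading constant.

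Since the numerator is constant, the derivatives of $\wt{A}_k$ reduce to derivatives of $1/D(s)$, which keeps the bookkeeping clean. I would record $D'(s)=(s+\mu)+(s+\lambda)=2s+\lambda+\mu$ and $D''(s)=2$, so that at $s=0$ we have $D(0)=\lambda_k\mu$, $D'(0)=\lambda+\mu$, and $D''(0)=2$. Then, differentiating $\wt{A}_k=\lambda_k\mu\,D^{-1}$, I get $\wt{A}_k'=-\lambda_k\mu\,D^{-2}D'$ and $\wt{A}_k''=\lambda_k\mu\,(2D^{-3}(D')^2-D^{-2}D'')$. Evaluating at $s=0$ yields
\[
\EE[A_k]=-\wt{A}_k'(0)={\lambda+\mu\over\lambda_k\mu},
\qquad
\EE[A_k^2]=\wt{A}_k''(0)={2(\lambda+\mu)^2\over(\lambda_k\mu)^2}-{2\over\lambda_k\mu}.
\]

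The expression for $\EE[A_k]$ is immediately seen to equal $\tfrac{1}{\lambda_k}(1+\lambda/\mu)$, matching the stated mean. For the variance I would then form $\VV[A_k]=\EE[A_k^2]-(\EE[A_k])^2$, which gives $\tfrac{(\lambda+\mu)^2}{(\lambda_k\mu)^2}-\tfrac{2}{\lambda_k\mu}$, and simplify by factoring out $1/\lambda_k^2$ and expanding $(\lambda+\mu)^2=\lambda^2+2\lambda\mu+\mu^2$. The main (and only mildly delicate) obstacle is this final algebraic reconciliation: after dividing through by $\mu^2$ the expression becomes $\tfrac{1}{\lambda_k^2}\bigl(1+2\lambda/\mu+\lambda^2/\mu^2-2\lambda_k/\mu\bigr)$, and one must recognize that $2\lambda/\mu-2\lambda_k/\mu=2(\lambda-\lambda_k)/\mu$ to land exactly on the claimed form $\tfrac{1}{\lambda_k^2}\{1+2(\lambda-\lambda_k)/\mu+\lambda^2/\mu^2\}$. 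Everything else is routine differentiation, so the whole corollary is essentially a verification that the two moment formulas are consistent with the transform.
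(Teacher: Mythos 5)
Your proposal is correct and is exactly the route the paper intends: the corollary is stated as an immediate consequence of Theorem~\ref{thm:moment}, obtained by differentiating the LST at $s=0$ via $\EE[A_k]=-\wt{A}_k'(0)$ and $\EE[A_k^2]=\wt{A}_k''(0)$, and your algebra checks out. The sanity check $D(0)=\lambda_k\mu$ and the final simplification to $2(\lambda-\lambda_k)/\mu$ are both handled correctly.
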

\medskip

%============================================================
%============================================================
%============================================================

\section{Correlation Coefficient}\label{sec:analysis2}

%We derive the correlation coefficient of stationary AoIs. In this section, we assume that $K=2$.
In~this~section,~assuming~that~$K=2$,~we~derive~the~correlation coefficient of stationary AoIs. 
For $n\in\bbZ$, let ${\alpha}_n$ denote the $n$th arrival time of packets of either sources 1 or 2, and let ${S}_n$ denote the service time of the packet arriving at~${\alpha}_n$.

In addition we define some notations related to valid packets.
We define ${\alpha}^*_n$ and  $S_n^*$ as the  arrival time and service time of the $n$th valid packet of either sources 1 or 2, respectively. 
We also define ${\beta}^*_n$ as  the $n$th  departure times of the valid packet of either sources 1 or 2; that is, ${\beta}^*_n={\alpha}_{n}^*+{S}_{n}^*$. Without loss of generality, we assume that
\[
\cdots<{\beta}^*_{0}\le 0<{\beta}^*_{1}<{\beta}^*_{2}<\cdots.
\]
We  define ${R}_{n}$ as the interval time of $n$th and $(n+1)$st updates; that is, ${R}_{n}={\beta}^*_{n+1}-{\beta}^*_{n}$.  We obtain the following lemma.
\begin{lem}
\label{lem:R} 
${R}_{n}$ follows the convolution of two independent random variables  following exponential distributions having mean $1/\lambda$ and $1/\mu$.
\end{lem}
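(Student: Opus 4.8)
The plan is to decompose each renewal interval $R_n$ into an idle phase and a busy phase and to treat the two separately. Immediately after the valid departure at $\beta^*_n$ the server is empty, so I would first write $R_n = I_n + B_n$, where $I_n$ is the time from $\beta^*_n$ until the next packet arrival (from either source) and $B_n$ is the time from that arrival until the next valid departure at $\beta^*_{n+1}$. Since the superposed arrival stream is Poisson with rate $\lambda$ and the server carries no memory of the past once it is idle, $I_n$ follows the exponential distribution with mean $1/\lambda$ and is independent of everything that happens after the first arrival, in particular of $B_n$. This already isolates one of the two claimed exponential factors, so the whole problem reduces to identifying the law of the busy phase $B_n$.

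Next I would analyse $B_n$ through the preemption mechanism. While a packet occupies the server, two competing exponential clocks run: a service clock of rate $\mu$ and an arrival clock of rate $\lambda$. The packet is valid precisely when the service clock fires first; otherwise the arriving packet preempts it and, by the memoryless property, a fresh statistically identical race begins. Hence $B_n$ is a sum $\sum_{i=1}^{M} T_i$ of round durations $T_i=\min(X_\mu,X_\lambda)$, each exponential with rate $\lambda+\mu$, where $M$ is the number of rounds up to and including the successful (valid) one. The key structural fact I would invoke is that for independent exponentials the value $\min(X_\mu,X_\lambda)$ is independent of the identity of the minimiser; consequently each round succeeds with probability $p=\mu/(\lambda+\mu)$ independently of its length, so $M$ is geometric with $\PP(M=m)=(1-p)^{m-1}p$ and is independent of the i.i.d.\ sequence $\{T_i\}$.

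Finally I would evaluate this geometric compound. Taking the Laplace--Stieltjes transform and summing the resulting geometric series gives
\[
\EE[\re^{-sB_n}]
=\sum_{m\ge 1}(1-p)^{m-1}p\left({\lambda+\mu\over s+\lambda+\mu}\right)^{m}
={p(\lambda+\mu)\over s+p(\lambda+\mu)}={\mu\over s+\mu},
\]
so the geometric sum of rate-$(\lambda+\mu)$ exponentials collapses to a single exponential of mean $1/\mu$, the bare service mean. Combining this with the independent idle phase yields $\EE[\re^{-sR_n}]=\frac{\lambda}{s+\lambda}\cdot\frac{\mu}{s+\mu}$, which is exactly the transform of the convolution of two independent exponentials with means $1/\lambda$ and $1/\mu$, proving the lemma. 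I expect the main obstacle to be the busy-phase step: one must justify carefully that the preemption rounds are genuinely i.i.d.\ and that their number is independent of their durations, since this independence (a special property of competing exponentials) is what makes the geometric sum collapse back to rate $\mu$ rather than to a more complicated law.
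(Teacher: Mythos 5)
Your proposal is correct and follows essentially the same route as the paper's Appendix B: both decompose $R_n$ into an idle phase of mean $1/\lambda$ followed by a geometric number of exponential$(\lambda+\mu)$ rounds (the paper counts these as $M_n-1$ preempted inter-arrival times plus the valid service time, while you treat all $M$ rounds uniformly as races between the service and arrival clocks), and both collapse the geometric compound via the LST to obtain $\frac{\lambda}{s+\lambda}\cdot\frac{\mu}{s+\mu}$. Your explicit appeal to the independence of $\min(X_\mu,X_\lambda)$ from the identity of the minimiser is exactly the fact the paper uses implicitly when computing the conditional transforms in \eqref{eq:20202029-6}--\eqref{eq:20200308-10}.
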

The proof of Lemma~\ref{lem:R} is shown in Appendix~\ref{app:lem:R:2}.

We first consider the AoI of source $k\in\{1,2\}$ immediately after source 1 or 2 is updated.  We define $A_{k,n}^\dagger:=A_{k}(\beta_{n}^*)$ for $k=1,2$ and $n\in\bbZ$. We obtain the following lemma.
\begin{lem}
\label{lem:after Update}
$\{A_{1,n}^\dagger\}$, $\{A_{2,n}^\dagger\}$, and $\{A_{1,n}^\dagger A_{2,n}^\dagger\}$ are stationary and ergodic. In addition, we have, for $n\in\bbZ$,
\begin{eqnarray*}
\EE[A_{k,n}^\dagger]
&=&
{1\over \lambda+\mu}+{\lambda-\lambda_k\over\lambda_k}\left({1\over \lambda}+{1\over\mu}\right), \qquad k=1,2,
\\
\EE[A_{1,n}^\dagger A_{2,n}^\dagger]
&=&
2\left({1\over \lambda+\mu}\right)^2
\\
&&{}+\left({\lambda^2\over\lambda_1\lambda_2}-2\right)\left({1\over \lambda}+{1\over\mu}\right){1\over \lambda+\mu}.
\end{eqnarray*}
\end{lem}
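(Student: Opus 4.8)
The plan is to reduce everything to a single stochastic recursion for the embedded sequences $\{A_{k,n}^\dagger\}$ and then take expectations in the stationary regime. The key structural fact is that $\beta_{n-1}^*$ and $\beta_n^*$ are \emph{consecutive} valid departures of either source, so exactly one source is refreshed at $\beta_n^*$. Writing $c_n\in\{1,2\}$ for the source of the $n$th valid packet, I observe that when the $n$th update belongs to source $k$ the displayed information of source $k$ carries timestamp $\alpha_n^*$, whence $A_{k,n}^\dagger=\beta_n^*-\alpha_n^*=S_n^*$; when it belongs to the other source, source $k$ is not refreshed during $(\beta_{n-1}^*,\beta_n^*]$ and its age simply grows by the elapsed time, giving $A_{k,n}^\dagger=A_{k,n-1}^\dagger+R_{n-1}$. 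Thus, for $k=1,2$,
\[
A_{k,n}^\dagger = S_n^*\,\mathbf{1}\{c_n=k\} + (A_{k,n-1}^\dagger + R_{n-1})\,\mathbf{1}\{c_n\neq k\}.
\]

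For the stationarity and ergodicity claim I would unfold this recursion. Setting $\tau_n=\max\{m\le n:c_m=k\}$ for the index of the most recent refresh of source $k$, the recursion yields $A_{k,n}^\dagger=S_{\tau_n}^*+\sum_{j=\tau_n}^{n-1}R_j$, a measurable functional of the driving sequence $\{(c_m,R_m,S_m^*)\}_{m\le n}$. Since each packet originates from source $k$ with the fixed positive probability $\lambda_k/\lambda$ independently of its service, the event $\{c_m=k\}$ occurs infinitely often in the past almost surely, so $\tau_n$ is a.s. finite and the functional is well defined. Because $\{(\alpha_n,S_n)\}$ is stationary and ergodic and the valid-departure sequence is a time-shift-covariant thinning of it, the driving sequence is stationary and ergodic; hence so are its factors $\{A_{1,n}^\dagger\}$, $\{A_{2,n}^\dagger\}$, and the products $\{A_{1,n}^\dagger A_{2,n}^\dagger\}$.

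I would then record the independence facts that power the moment computations: (i) $c_n$ is independent of the history before $\beta_{n-1}^*$, hence of $A_{k,n-1}^\dagger$; (ii) the source type is independent of the service time, so $S_n^*$ is $\exp(\lambda+\mu)$ irrespective of $c_n$, with $\EE[S_n^*]=1/(\lambda+\mu)$ and $\VV[S_n^*]=1/(\lambda+\mu)^2$; (iii) $R_{n-1}$ is independent of the past; and (iv) the valid service time sits at the tail of the interval, $R_{n-1}=W_n+S_n^*$ with $W_n=\alpha_n^*-\beta_{n-1}^*$ (idle time plus preempted service segments) independent of $S_n^*$ by the memoryless competition between arrivals and service. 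Taking expectations of the recursion and invoking (i)--(ii) together with stationarity ($\EE[A_{k,n-1}^\dagger]=\EE[A_k^\dagger]$) and $\EE[R]=1/\lambda+1/\mu$ from Lemma~\ref{lem:R}, I obtain the scalar relation $\tfrac{\lambda_k}{\lambda}\EE[A_k^\dagger]=\tfrac{\lambda_k}{\lambda}\tfrac{1}{\lambda+\mu}+\tfrac{\lambda-\lambda_k}{\lambda}\EE[R]$, whose solution is precisely the stated $\EE[A_{k,n}^\dagger]$.

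For the cross moment I would condition on $c_n$; only two cases occur, and in each the refreshed source contributes $S_n^*$ while the other contributes $A_{\cdot,n-1}^\dagger+R_{n-1}$, so
\begin{align*}
\EE[A_{1,n}^\dagger A_{2,n}^\dagger]
&=\frac{\lambda_1}{\lambda}\EE[S_n^*(A_{2,n-1}^\dagger+R_{n-1})\mid c_n=1]\\
&\quad+\frac{\lambda_2}{\lambda}\EE[S_n^*(A_{1,n-1}^\dagger+R_{n-1})\mid c_n=2].
\end{align*}
Each conditional expectation splits into $\EE[S_n^*]\,\EE[A_\cdot^\dagger]$, using independence of $S_n^*$ from the past, plus the joint moment $\EE[S_n^* R_{n-1}]$, which by (iv) equals $\EE[S_n^*]\EE[R]+\VV[S_n^*]$; all of these are unaffected by conditioning on $c_n$. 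Substituting the means $\EE[A_k^\dagger]$ found above and simplifying with $\lambda_1^3+\lambda_2^3=\lambda(\lambda_1^2-\lambda_1\lambda_2+\lambda_2^2)$ and $\lambda_1^2+\lambda_2^2=\lambda^2-2\lambda_1\lambda_2$ collapses the result to the stated closed form. I expect the only non-routine points to be the justification of the decomposition in (iv)---identifying $S_n^*$ as the final segment of $R_{n-1}$ and arguing its independence from $W_n$---and the bookkeeping of the conditioning; the final algebra is then mechanical.
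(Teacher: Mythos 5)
Your proof is correct and rests on the same structural representation as the paper's: both identify $A_{k,n}^\dagger$ as the valid service time of source $k$'s last refresh plus the accumulated inter-update intervals $R_j$ since then, both use that the type sequence is i.i.d.\ with $\PP(c_n=k)=\lambda_k/\lambda$ independently of services, and both reduce the cross moment to the same key joint moment $\EE[S_n^*R_{n-1}]=\EE[S^*]\EE[R]+\VV[S^*]$ via the decomposition $R_{n-1}=(\alpha_n^*-\beta_{n-1}^*)+S_n^*$ with the two pieces independent (this is exactly the paper's equation \eqref{eq:20200407-1}). The difference is purely in how the averaging over ``time since the last refresh of source $k$'' is carried out: the paper makes the last-refresh index $D_{k,n}$ explicit, writes down its geometric law (and, for the cross moment, the joint law of $(D_{1,n},D_{2,n})$, which is supported on the set where exactly one of them equals $n$) and sums the resulting series; you instead write the one-step recursion $A_{k,n}^\dagger=S_n^*\mathbf{1}\{c_n=k\}+(A_{k,n-1}^\dagger+R_{n-1})\mathbf{1}\{c_n\neq k\}$ and solve the fixed-point equations that stationarity imposes on $\EE[A_k^\dagger]$ and $\EE[A_1^\dagger A_2^\dagger]$. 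Your route buys a cleaner bookkeeping --- the joint distribution \eqref{eq:20200301-7} never needs to be written down, because conditioning on $c_n$ alone already encodes that exactly one source is refreshed at each $\beta_n^*$ --- at the price of having to justify that the stationary moments are finite so that the fixed-point equation may be solved (which your unfolded representation, identical to the paper's \eqref{eq:20200301-6}, supplies). The independence facts you invoke ($c_n$ and $S_n^*$ independent of the pre-$\beta_{n-1}^*$ history, $S_n^*$ independent of $c_n$) are all valid and are the same ones the paper uses implicitly; I verified that your algebra reproduces the stated closed forms.
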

\begin{proof}
For $n\in\bbZ$, let $C_n$ denote the type of the valid packet arriving at ${\alpha}_n^*$. For $k=1,2$ and $n\in\bbZ$, we define
\[
D_{k,n}:=\min\left\{m\le n;~C_m=k\right\},
\]
which means that the $D_{k,n}$th valid packet is the last packet which arrives from source $k$ before the $n$th update. Note that $D_{k,n}=n$ if the valid packet arriving at $\alpha_n^*$ is generated by source $k$. Using this notation,  we have
\begin{eqnarray}\textstyle
A_{k,n}^\dagger
=
{S}^*_{D_{k,n}}
+
\sum_{j=D_{k,n}}^{n-1}{R}_{j},
\label{eq:20200301-6}
\end{eqnarray}
which implies that $\{A_{1,n}^\dagger\}$, $\{A_{2,n}^\dagger\}$ and $\{A_{1,n}^\dagger A_{2,n}^\dagger\}$ are stationary and ergodicity.

Since all packets  have the same service time distribution, $\{C_n\}$ are the i.i.d. random variables such that $\PP(C_n=1)=\lambda_1/\lambda$ and $\PP(C_n=2)=\lambda_2/\lambda$. We then have
\begin{eqnarray}
\PP(D_{k,n}=m)
&=&
\left\{\begin{array}{ll}
(\lambda_k/ \lambda)(1-{\lambda_k/\lambda})^{n-m},
& m\le n,\\
0,&m>n.
\end{array}\right.
\label{eq:20200301-5}
\end{eqnarray}
Note that $\{D_{k,n}\}$ is independent of $\{S^*_n\}$ and $\{R_n\}$. It then follows from \eqref{eq:20200301-6} and \eqref{eq:20200301-5} that, for $n\in\bbZ$ and $k=1,2$,
\begin{eqnarray*}
\EE[A_{k,n}^\dagger]
&=&
\textstyle\EE\left[\EE\left[{S}^*_{D_{k,n}} +
\sum_{j=D_{k,n}}^{n-1}
{R}_j\,\big|\, D_{k,n}\right]\right]\\
&=&\EE[S_0]+{\lambda\over\lambda_k}\left(1-{\lambda\over\lambda_k}\right)\EE[R_0].
\end{eqnarray*}
Applying Lemmas~\ref{lem:S} and \ref{lem:R} to the above, we obtain
\[
\EE[A_{k,n}^\dagger]
={1\over \lambda+\mu}+{\lambda\over\lambda_k}\left(1-{\lambda_k\over\lambda}\right)\left({1\over \lambda}+{1\over\mu}\right).
\]

\begin{figure}[!t]
\begin{center}
\includegraphics[clip,width=.90\linewidth]{./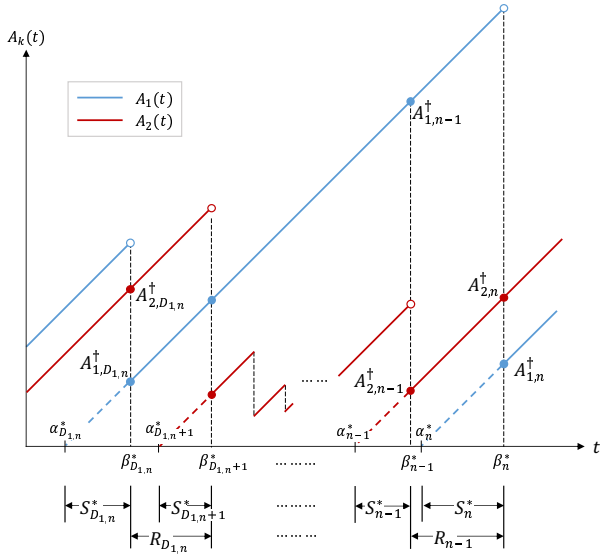}
\end{center}
\caption{Two sample paths of the AoI processes $A_k(t)$ ($k\!=\!1,2$).}
\label{fig:AoI2}
\end{figure}

In addition, we have, for $m,\ell\in\bbZ$,
\begin{eqnarray}
\lefteqn{
\PP(D_{1,n}=m,D_{2,n}=\ell)
}~~~~&&\nonumber\\
~&=&
\left\{\begin{array}{ll}
({\lambda_2/\lambda})({\lambda_1/ \lambda})^{n-\ell},
& m=n,~\ell<n,\\
({\lambda_1/ \lambda})({\lambda_2/ \lambda})^{n-m},
& m<n,~\ell=n,\\
0,&otherwise.
\end{array}\right.
\label{eq:20200301-7}
\end{eqnarray}
Note that  $S_n^*$ does not depend on $S_{n-m}^*$ and $R_\ell$ for $\ell\neq n-1$, but depends on $R_{n-1}$. Thus, from \eqref{eq:20200301-6} and \eqref{eq:20200301-7}, we obtain
\begin{eqnarray*}
\lefteqn{
\EE[A_{1,n}^\dagger A_{2,n}^\dagger]
}{}&&\\
&&{}=
\EE\left[\EE[A_{1,n}^\dagger A_{2,n}^\dagger|D_{1,n},D_{2,n}]\right]
\\
&&{}=
\EE[{S}^*_0]^2+
\left\{{\lambda^2\over\lambda_1\lambda_2}-3\right\}
\EE\left[{S}^*_0\right]\EE\left[{R}_0\right]+\EE\left[{S}^*_1{R}_{0}\right],
\end{eqnarray*}
By definitions of $R_n$, we have
\[
S^*_1R_0=S^*_1\left\{(\alpha_1^*+S^*_1)-(\alpha_0^*+S^*_0)\right\}.
\]
Thus, it follows from Lemma~\ref{lem:S} and \eqref{eq:20200308-2} that
\begin{eqnarray}
\EE[S^*_{n}R_{n-1}]
&=&
\EE[S^*_1]\EE[\alpha_1^*-\alpha_0^*]+\EE[(S^*_1)^2]-\EE[S^*_1]\EE[S^*_0]
\nonumber\\
&=&
\left({1\over\lambda}+{1\over\mu}\right)\left({1\over\lambda+\mu}\right)+\left({1\over\lambda+\mu}\right)^2.
\label{eq:20200407-1}
\end{eqnarray}
Applying Lemmas~1,~2 and \eqref{eq:20200407-1} to the above, we obtain
\begin{eqnarray*}
\EE[A_{1,n}^\dagger A_{2,n}^\dagger]
&=&
2\left({1\over \lambda+\mu}\right)^2
\\
&&{}+\left({\lambda^2\over\lambda_1\lambda_2}-2\right)\left({1\over \lambda}+{1\over\mu}\right){1\over \lambda+\mu}.
\end{eqnarray*}
\qed
\end{proof}

Using Lemma~\ref{lem:after Update}, we obtain the main theorem of this paper.
\begin{thm}
\label{thm:coefficient}
The correlation coefficient of AoIs in 2-source M/M/1/1 push-out queue, denoted by $\rho$, is given by
\begin{equation}
	\rho={-2\lambda_1\lambda_2\mu \over \lambda \sqrt{(\lambda^2+2\lambda_1\mu+\mu^2)(\lambda^2+2\lambda_2\mu+\mu^2)}}
\end{equation}
\end{thm}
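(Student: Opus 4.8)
The plan is to write $\rho=\mathrm{Cov}(A_1,A_2)/\sqrt{\VV[A_1]\VV[A_2]}$ and to notice that the denominator is already supplied by Corollary~\ref{coro:moment}. Substituting $\lambda-\lambda_1=\lambda_2$ and $\lambda-\lambda_2=\lambda_1$ into $\VV[A_k]$ gives
\[
\sqrt{\VV[A_1]\VV[A_2]}=\frac{1}{\lambda_1\lambda_2\mu^2}\sqrt{(\lambda^2+2\lambda_1\mu+\mu^2)(\lambda^2+2\lambda_2\mu+\mu^2)},
\]
which already reproduces the square-root factor in the claimed $\rho$. Hence the whole theorem reduces to evaluating the single quantity $\EE[A_1A_2]$, and I expect the resulting covariance $\EE[A_1A_2]-\EE[A_1]\EE[A_2]$ to collapse to the clean value $-2/(\lambda\mu)$, from which the stated $\rho$ follows at once by division.

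To obtain $\EE[A_1A_2]$ I would convert the time-stationary product moment into the post-update quantities of this section. Between two consecutive updates $\beta_n^*$ and $\beta_{n+1}^*$ neither source is refreshed, so each age grows with unit slope, $A_k(\beta_n^*+u)=A_{k,n}^\dagger+u$ for $0\le u<R_n$ and $k=1,2$. A renewal-reward argument over one update cycle then gives
\[
\EE[A_1A_2]=\frac{1}{\EE[R_0]}\,\EE\!\left[\int_0^{R_0}(A_{1,0}^\dagger+u)(A_{2,0}^\dagger+u)\,du\right],
\]
whose integrand expands into the cycle-averaged terms $A_{1,0}^\dagger A_{2,0}^\dagger R_0$, $(A_{1,0}^\dagger+A_{2,0}^\dagger)R_0^2/2$, and $R_0^3/3$.

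The step that makes this tractable is the observation that a valid packet, by definition, completes with no arrival during its service, so the single server is left empty at $\beta_n^*$; thus $\beta_n^*$ is a regeneration epoch, and the future inter-update length $R_n$ is independent of the post-update ages $(A_{1,n}^\dagger,A_{2,n}^\dagger)$, which are fixed by the history up to $\beta_n^*$. This independence factorizes every expectation above and reduces $\EE[A_1A_2]$ to
\[
\EE[A_{1,0}^\dagger A_{2,0}^\dagger]+\bigl(\EE[A_{1,0}^\dagger]+\EE[A_{2,0}^\dagger]\bigr)\frac{\EE[R_0^2]}{2\EE[R_0]}+\frac{\EE[R_0^3]}{3\EE[R_0]}.
\]
Every ingredient is now known: the first-order and product post-update moments come from Lemma~\ref{lem:after Update}, while $\EE[R_0]$, $\EE[R_0^2]$, $\EE[R_0^3]$ are the first three moments of the convolution of $\mathrm{Exp}(\lambda)$ and $\mathrm{Exp}(\mu)$ furnished by Lemma~\ref{lem:R}.

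The remainder is algebra. Writing $c=1/\lambda+1/\mu$ and $d=1/(\lambda+\mu)$, the identities $cd=1/(\lambda\mu)$ and $\lambda^2-(\lambda_1^2+\lambda_2^2)=2\lambda_1\lambda_2$ make the $d^2$ contributions cancel, combine the $c^2$ contributions into $\lambda^2c^2/(\lambda_1\lambda_2)=\EE[A_1]\EE[A_2]$, and leave a residual term equal to $-2/(\lambda\mu)$; subtracting $\EE[A_1]\EE[A_2]$ therefore gives $\mathrm{Cov}(A_1,A_2)=-2/(\lambda\mu)$, and dividing by the square-root factor above yields the theorem. The main obstacle is not this final computation but rigorously justifying the two displayed reductions: one must invoke the Palm-inversion (renewal-reward) formula for the product process, using the stationarity and ergodicity granted by Lemma~\ref{lem:after Update}, and establish the regenerative independence of $R_n$ from the pre-$\beta_n^*$ history; the negative sign and the exact cancellations in the last step serve as a reassuring consistency check.
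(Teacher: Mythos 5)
Your proposal is correct and follows essentially the same route as the paper: the paper also reduces $\EE[A_1A_2]$ to $\EE[F_0]/\EE[R_0]$ with $F_0=R_0^3/3+\tfrac{R_0^2}{2}(A_{1,0}^\dagger+A_{2,0}^\dagger)+R_0A_{1,0}^\dagger A_{2,0}^\dagger$ (justified via the pointwise ergodic theorem and the elementary renewal theorem, i.e.\ exactly the Palm-inversion/renewal-reward argument you describe), factorizes using the independence of $R_0$ from the post-update ages, and then plugs in Lemmas~\ref{lem:R} and~\ref{lem:after Update} and Corollary~\ref{coro:moment}. Your intermediate value $\mathrm{Cov}(A_1,A_2)=-2/(\lambda\mu)$ agrees with the paper's $\EE[A_1A_2]=\frac{\lambda^2}{\lambda_1\lambda_2}\left(\frac{1}{\lambda}+\frac{1}{\mu}\right)^2-\frac{2}{\lambda\mu}$ after subtracting $\EE[A_1]\EE[A_2]$.
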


\begin{proof}
Using the pointwise ergodic theorem (see, e.g., \cite[Theorem 1.6.4]{Bacc13}) yields
\begin{equation}
\EE[A_1A_2]
=\lim_{T\to\infty}{1\over T}\int_0^TA_1(t)A_2(t)\rd t.
\label{defn:ol_A1A2}
\end{equation}
Dividing the integral in \eqref{defn:ol_A1A2}  by update times $\{\beta_n\}$, we have
\begin{eqnarray}
\EE[A_1A_2]
&=&
\lim_{T\to\infty}{\Delta(T)}
+
\lim_{T\to\infty}{N(T)\over T}\cdot {1\over N(T)}
\sum_{n=1}^{N(T)}F_n,\qquad
\label{eq:20200307-4}
\end{eqnarray}
where  $N(t)$ denotes the total number of updates in $[0,t)$ and
\begin{eqnarray}
\delta(T)&=&
-{1\over T}\int_{\beta^*_0}^{0}A_1(t)A_2(t)\rd t+{1\over T}\int_{\beta^*_{N(T)}}^{T}A_1(t)A_2(t)\rd t,\nonumber\\F_n&=&
\int_{\beta^*_{n-1}}^{\beta^*_n}A_1(t)A_2(t)\rd t.
\label{eq:20200306-1}
\end{eqnarray}
Note that, for $t\in[\beta^*_{n},\beta^*_{n+1})$, 
\begin{eqnarray}
A_1(t)A_2(t)
&=&\left(t-\beta^*_{n}+A_{1,n}^\dagger\right)\left(t-\beta^*_{n}+A_{2,n}^\dagger\right).\quad
\label{eq:20200306-2}
\end{eqnarray}

We estimate the right-hand side of \eqref{eq:20200307-4}. From \eqref{eq:20200306-2}, we have
\begin{eqnarray}
\left|\Delta(T)\cdot T\right|&&\le
{R_{0}}(A_{1,0}^\dagger+R_{0})(A_{2,0}^\dagger+R_0)
\label{eq:20200307-5}\\
&&{}{}{}+{R_{N(T)}}(A^\dagger_{1,N(T)}+R_{N(T)})(A^\dagger_{2,N(T)}+R_{N(T)}).
\nonumber
\end{eqnarray}
It follows from Lemmas~\ref{lem:R} and \ref{lem:after Update} that $R_n$ and $A_k({\beta}^*_{n})$ are finite w.p.1. Thus, it follows from \eqref{eq:20200307-5} that 
\begin{eqnarray}
\lim_{T\to\infty}{\Delta(T)}=0.
\label{eq:20200307-1}
\end{eqnarray}
Furthermore, $\{R_n\}$ is the i.i.d.\ random variables,  because the system becomes empty at time $\beta_n^*$, $n\in\bbZ$. Thus,  it follows from the  elementary renewal theorem and Lemma~\ref{lem:R} that 
\begin{eqnarray}
\lim_{T\to\infty}{N(T)\over T}
={1\over \EE[R_0]}=\left({1\over\lambda}+{1\over\mu}\right)^{-1}.
\label{eq:20200229-2}
\end{eqnarray}
Furthermore, using the pointwise ergodic theorem, we have
\begin{eqnarray}
\lim_{T\to\infty}{1\over N(T)}\sum_{n=1}^{N(T)}F_n
=
\lim_{N\to\infty}{1\over N}\sum_{n=1}^{N}F_n
=\EE\left[F_0\right].
\label{eq:20200306-3}
\end{eqnarray}
where the first equation holds because it follows from \eqref{eq:20200229-2} that $\lim_{T\to\infty}N(T)=\infty$. 
Substituting \eqref{eq:20200307-1}--\eqref{eq:20200306-3} into \eqref{eq:20200307-4} yields
\begin{eqnarray}
\EE[A_1A_2]
&=&
\left({1\over\lambda}+{1\over\mu}\right)^{-1}\cdot\EE\left[F_0\right].
\label{eq:20200307-6}
\end{eqnarray}

Next, we calculate $\EE[F_0]$. Applying \eqref{eq:20200306-2} to \eqref{eq:20200306-1}, we have
\begin{eqnarray}
F_n
&=&
{R_n^3\over 3}
+
{R_n^2\over 2}(A_{1,n}^\dagger+A_{2,n}^\dagger)
+
R_nA_{1,n}^\dagger A_{2,n}^\dagger.\nonumber
\end{eqnarray}
Using Lemmas~\ref{lem:R} and \ref{lem:after Update}, it follows from the above that
\begin{eqnarray*}
\EE\left[F_0\right]
&=&
{\EE[R_0^3]\over 3}
+
{\EE[R_0^2]\over 2}\EE[A_{1,n}^\dagger+A_{2,n}^\dagger]
%\nonumber\\&&{}
+
\EE[A_{1,n}^\dagger A_{2,n}^\dagger]\EE[R_0^*]
\\
&=&
{\lambda^2\over\lambda_1\lambda_2}\left({1\over \lambda}+{1\over\mu}\right)^3-2\left({1\over \lambda}+{1\over\mu}\right)^2{1\over \lambda+\mu}.
\label{eq:20200301-14}
\end{eqnarray*}
Combining the above and  \eqref{eq:20200307-6}, we obtain
\[
\EE[A_1A_2]
=
{\lambda^2\over\lambda_1\lambda_2}\left({1\over \lambda}+{1\over\mu}\right)^2-2\left({1\over \lambda}+{1\over\mu}\right){1\over \lambda+\mu}.
\]

Consequently, from  the above and Corollary~\ref{coro:moment}, we obtain
\begin{eqnarray*}
\rho
&=&
{\EE[A_1A_2]-\EE[A_1]\EE[A_2]\over \sqrt{\VV[A_1]\VV[A_2]}}
\\
&=&
{-2\lambda_1\lambda_2\mu \over \lambda \sqrt{(\lambda^2+2\lambda_1\mu+\mu^2)(\lambda^2+2\lambda_2\mu+\mu^2)}}.
\end{eqnarray*}
\qed
\end{proof}

From Theorem~\ref{thm:coefficient}, we can see some nontrivial properties on the systems as follows. The proof of Corollary~\ref{cor:property} is omitted due to the page restriction.
\begin{coro}\label{cor:property}
The following statements holds.
\begin{itemize}
\item[(i)]
AoIs of sources 1 and 2 have a negative correlation; that is, $\rho<0$.
\item[(ii)]
When any of $\lambda_1$, $\lambda_2$, or $\mu$ approaches infinity, the correlation coefficient $\rho$ converges to zero.
\item[(iii)]
The minimum value of the correlation coefficient $\rho$ is $-1/6$, which is achieved when $\lambda_1/2=\lambda_2/2=\mu$.
\end{itemize}
\end{coro}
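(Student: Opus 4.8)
The plan is to establish the three assertions in increasing order of difficulty. The key structural fact, used throughout, is that the closed form of $\rho$ in Theorem~\ref{thm:coefficient} is homogeneous of degree $0$ in $(\lambda_1,\lambda_2,\mu)$ --- numerator and denominator both scale like the cube of the parameters --- so $\rho$ depends only on the ratios of $\lambda_1$, $\lambda_2$, $\mu$. Claim (i) is then immediate by inspection of signs: with $\lambda_1,\lambda_2,\mu>0$ the numerator $-2\lambda_1\lambda_2\mu$ is strictly negative, while each factor $\lambda^2+2\lambda_i\mu+\mu^2$ under the radical is a sum of positive terms, so the denominator $\lambda\sqrt{\cdots}$ is strictly positive; hence $\rho<0$.

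For (ii) I would take the three limits separately, keeping in mind that $\lambda=\lambda_1+\lambda_2$ moves with $\lambda_1$ and $\lambda_2$. As $\mu\to\infty$ with $\lambda_1,\lambda_2$ fixed, the numerator is of order $\mu$ and the denominator of order $\mu^2$, so $\rho\to0$; as $\lambda_1\to\infty$ (hence $\lambda\to\infty$) with $\lambda_2,\mu$ fixed, the numerator is of order $\lambda_1$ and the denominator of order $\lambda_1^3$, and symmetrically for $\lambda_2\to\infty$. A leading-order comparison settles each case.

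Claim (iii) is the heart of the matter, and I would avoid a brute-force two-variable Hessian computation in favour of a decoupling. First, fix $\lambda$ and $\mu$ and regard $-\rho$ as a function of $p:=\lambda_1\lambda_2$ alone: writing $u:=\lambda^2+\mu^2$, the product under the radical equals $(u+2\lambda_1\mu)(u+2\lambda_2\mu)=u^2+2u\mu\lambda+4\mu^2 p$, so that $-\rho=2\mu p/\bigl(\lambda\sqrt{u^2+2u\mu\lambda+4\mu^2 p}\bigr)$. A one-line derivative check shows this is strictly increasing in $p$, and under $\lambda_1+\lambda_2=\lambda$ the quantity $p=\lambda_1\lambda_2$ is maximized exactly at $\lambda_1=\lambda_2$; hence any minimizer of $\rho$ lies on the diagonal. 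On that diagonal $-\rho$ collapses to $(\lambda\mu/2)/(\lambda^2+\lambda\mu+\mu^2)$, and homogeneity lets me normalize $\mu=1$ and maximize the single-variable function $(\lambda/2)/(\lambda^2+\lambda+1)$. Its derivative vanishes at $\lambda=\mu$, giving $-\rho=1/6$, that is $\rho=-1/6$.

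The main obstacle is confirming that this decoupled optimum is the global minimum rather than merely a stationary point. I would close this gap using part (ii): since $\rho$ tends to $0$ whenever any ratio degenerates to $0$ or $\infty$, the infimum of the strictly negative function $\rho$ is attained in the interior of the parameter region, and the monotonicity-in-$p$ step together with the one-variable step pins down the unique interior candidate. I would also re-examine the stated attaining condition, since the computation above locates the minimizer at $\lambda_1=\lambda_2=\mu/2$ (equivalently $2\lambda_1=2\lambda_2=\mu$) rather than at the parameters as literally written.
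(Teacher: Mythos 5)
The paper itself omits the proof of Corollary~\ref{cor:property} (``due to the page restriction''), so there is no official argument to compare against; judged on its own, your proof is correct and complete. Parts (i) and (ii) are the routine sign and leading-order checks. For (iii), your decoupling is sound: writing the product under the radical as $u^2+2u\mu\lambda+4\mu^2p$ with $u=\lambda^2+\mu^2$ and $p=\lambda_1\lambda_2$, the derivative of $p/\sqrt{a+bp}$ in $p$ equals $(a+bp/2)/(a+bp)^{3/2}>0$, so $-\rho$ is maximized on the diagonal $\lambda_1=\lambda_2$ for fixed $(\lambda,\mu)$; there $-\rho=(\lambda\mu/2)/(\lambda^2+\lambda\mu+\mu^2)\le 1/6$ by AM--GM (or your one-variable derivative, whose sign is that of $\mu^2-\lambda^2$, so the critical point $\lambda=\mu$ is indeed the global maximum --- a detail worth stating explicitly rather than only noting that the derivative vanishes). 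Your closing appeal to part (ii) for attainment of the infimum is dispensable, since the two monotonicity steps already give the global inequality $\rho\ge -1/6$ directly. Finally, your flag on the attaining condition is well taken: the minimizer is $\lambda_1=\lambda_2=\mu/2$, i.e.\ $2\lambda_1=2\lambda_2=\mu$, which matches the paper's own numerical example $(\lambda_1,\lambda_2,\mu)=(2,2,4)$ in Section~\ref{sec:evaluation}; the condition ``$\lambda_1/2=\lambda_2/2=\mu$'' as literally printed in the corollary is a typo (it would give $\rho=-2/21$).
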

\medskip

%============================================================
%============================================================
%============================================================

\section{Numerical Results}\label{sec:evaluation}

In this section, we provide numerical results of the correlation coefficient presented in Theorem~\ref{thm:coefficient}, and confirm the statements presented in Corollary~\ref{cor:property} through the numerical results.
Fig.~\ref{fig:numerical} shows the correlation coefficients which is numerically computed with several cases using Theorem~\ref{thm:coefficient}, where the $x$-axis represents the value of  $\lambda_1$, the arrival rate of source~1. Note that the parameter $\mu$ is fixed for each curve in Fig.~\ref{fig:numerical}, and $\lambda_2$ is fixed as $\lambda_2=2$.
%Fig.~\ref{fig:numerical} shows the results of the correlation coefficient $\rho$. $\rho$ is numerically computed with several cases of the values of $\mu$, and the abscissa of the figure denotes $\lambda_1$. 

From Fig.~\ref{fig:numerical}, we observe that $\rho$ is always negative, that is, the two AoIs in our model always have a negative correlation, which implies (i) in Corollary~\ref{cor:property}. We also find that the curves in the figures are all convergent to zero, so that (ii) in Corollary~\ref{cor:property} is likely to hold with respect to $\lambda_1$. Moreover, we see that the correlation coefficient has a minimum value with respect to $\lambda_1$. This means that a certain arrival rate of packets from one source gives the strongest negative correlation with the other source. Furthermore, we see from the figure that the smallest minimum value of $\rho$ is seen when $\mu=4$.
%the minimum value changes depending on the other parameter $\mu$, and a certain value of $\mu$ gives the smallest one of the minimum values.
Actually, the smallest value in the figure is $-1/6$, and is achieved when $(\lambda_1, \lambda_2, \mu)=(2,2,4)$, which corresponds to (iii) in Corollary~\ref{cor:property}.
%It seems that all the parameters $\lambda_1, \lambda_2, \mu$ in our model affect the value of $\rho$, and $\rho$ can be minimized with respect to $\lambda_1$, $\lambda_2$, and $\mu$. 

\begin{figure}[!t]
\begin{center}
\includegraphics[clip,width=.94\linewidth]{./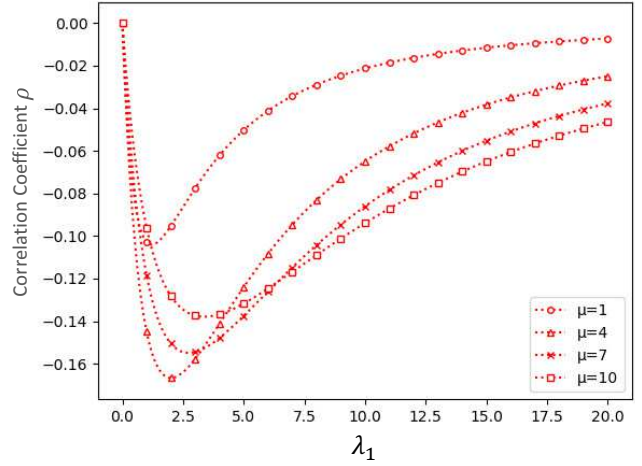}
\end{center}
\caption{Numerically computed results of $\rho$ with several patterns of the parameter $\mu$, and with fixed parameter $\lambda_2$ as $\lambda_2=2$.}
\label{fig:numerical}
\end{figure}
\medskip

%============================================================
%============================================================
%============================================================

\section{Conclusion}\label{sec:conclusion}

In this paper, we considered a correlation of the AoIs of two sources sharing one server to process information, and derived the closed form expression of the correlation coefficient, on the model of M/M/1/1 queueing systems. In addition, we also derived the expression of the LST of the stationary distribution of each AoI on the assumption that multiple $K$ sources share the one server. From our analysis, we found that the correlation coefficient is always negative, and that the correlation coefficient has a certain minimum value. This indicates that there is always a negative correlation between the two AoIs, and the strongest negative correlation is achieved by adjusting the parameters introduced in our model.

For further study, it would be expected that the correlation coefficient is investigated with more generalized assumptions for modeling the systems because our model adopts an elemental M/M/1/1 queueing systems with the common service rate for all the sources. To consider the correlation among several sources is also an interesting research point.
\medskip

%============================================================
%============================================================
%============================================================
\appendices 
\section{}
\label{app:lem:R}
This appendix is devoted to the proof of \eqref{eq:20200308-6}.
For $k\in\bbK$ and $n\in\bbZ$, let $N_{k,n}$ denote the number of arriving packets of source $k$ in $(0,\beta_n^*]$. 
We define $M_{k,n}$ as  the number of arriving packets of source $k$ in $(\beta_n^*,\beta_{n+1}^*]$; that is, $M_{k,n}=N_{k,n+1}-N_{k,n}$. In addition, we define $U_{k,n}:=\alpha_{k,n+1}-\alpha_{k,n}$.

Let also ${J}_{k,n}$ denote the length from the time that the $n$th valid packet departs to the time that a new packet arrives from source $k$; that is $J_{k,n}=U_{k,N_{k,n}}-S_{k,N_{k,n}}$. 
We have the following relation (see Fig.~\ref{fig:AoI1}).
\[
R_{k,n}=J_{k,n}+\sum_{\ell=1}^{M_{k,n}-1}U_{k,N_{k,n}+\ell}
+S_{k,N_{k,n+1}}.
\]
which leads to
\begin{eqnarray}
\EE\left[\re^{-sR_{k,n}} \big|M_{k,n}\right] 
&=&
\EE\left[\re^{-sJ_{k,n} }\right]\cdot\EE\left[\re^{-sS_{k,N_{k,n+1}}}\right]
\nonumber\\
&&{}{}\times{{\textstyle\prod_{\ell=1}^{M_{k,n}-1}} }\EE\left[\re^{-sU_{k,N_{k,n}+\ell}}\big|M_{k,n}\right].\quad~~
\label{20200301-2-2}
\end{eqnarray}
For $\ell=1,\dots,M_{k,n}-1$, the $(N_{k,n}+\ell)$-th arrival packet is not valid. 
We then have, for $\ell=1,\dots,M_{k,n}-1$,

\begin{eqnarray}
\EE\left[\re^{-sU_{N_{k,n}+\ell}}\big|M_{k,n}\right]
&=&\EE\left[\re^{-sX_{\lambda_k}}\big|{\textstyle\min_{\ell\in\bbK}}X_{\lambda_\ell} \le X_\mu\right]
\nonumber\\
&=&
{\lambda+\mu\over s+\lambda+\mu }{s+\lambda\over \lambda}{\lambda_k\over s+\lambda_k},
\label{eq:20202029-6-2}
\end{eqnarray}
where  $X_a$ denotes a random variable  following the exponential  distribution  with mean $1/a$ for $a>0$.
Furthermore, it follows from the memoryless property that 
\begin{eqnarray}
\EE\left[\re^{- sJ_{k,n}}\right]
&=&\EE\left[\re^{-s(X_{\lambda_k}-X_{\mu})} \big|X_{\lambda_k}>X_\mu\right]
=
{\lambda_k\over s+\lambda_k}.\qquad
\end{eqnarray}
Note that $S_{k,N_{k,n+1}}$ is the service time of a valid packet. Thus, it follows from Lemma~\ref{lem:S} that 
\begin{eqnarray}
\EE[\re^{-sS_{k,N_{k,n+1}}}]={\lambda+\mu\over s+\lambda+\mu}.
\label{eq:20202029-7-2}
\end{eqnarray}
Applying \eqref{eq:20202029-6-2}--\eqref{eq:20202029-7-2} to \eqref{20200301-2-2}, we obtain 
\begin{eqnarray}
\EE\left[\re^{-sR_{k,n}}|M_{k,n}\right]
&=&{\lambda+\mu\over s+\lambda+\mu}{\lambda_k\over s+\lambda_k}
\nonumber\\
&&{}\times\left({\lambda+\mu\over s+\lambda+\mu}{\lambda_k\over s+\lambda_k}{s+\lambda\over \lambda}\right)^{M_{k,n}-1}.
\label{2-2--319-1}
\end{eqnarray}

Finally, we show the distribution function of $M_{k,n}$. A packet is valid if no packets (of all sources) arrive before its service completion.
It then follows from the independence of $\{{\alpha}_{k,n}\}$ and $\{{S}_{k,n}\}$ that a packet is valid with probability $\mu/(\lambda+\mu)$, independently other packets. Therefore, for any $n\in\bbN$ and $k\in\bbK$, $M_{k,n}$ follows the geometric distribution on $\bbN$ with parameter $\mu/(\lambda+\mu)$. Thus, from \eqref{2-2--319-1}, we obtain%
\begin{eqnarray}
\EE[\re^{-sR_{k,n}}]
&=&
{\lambda_k\mu\over(s+\lambda)(s+\mu)-(\lambda-\lambda_k)\mu}.
\label{app:20200308-1}
\end{eqnarray}
\medskip

%============================================================
%============================================================
%============================================================
\section{}
\label{app:lem:R:2}
This appendix is devoted to the proof of Lemma~\ref{lem:R}.
We derive the moment generation function of ${R}_{n}$. Let ${N}_n$ denote the total number of packets arriving from either source 1 and 2 in $(0,{\beta}_n^*]$.  In addition, we define $U_{n}:=\alpha_{n+1}-\alpha_{n}$.

We define ${M}_n:={N}_{n+1}-{N}_n$ and ${J}_{n}:={U}_{{N}_{n}}-{S}_{{N}_{n}}$. As similar way to \eqref{20200301-2-2}, we have
\begin{eqnarray}
\EE\left[\re^{-s{R}_n} \big|M_n\right]
&=&
\EE\left[\re^{-s{J}_n }\right]\cdot\EE\left[\re^{-s{S}_{{N}_{n+1}}}\right]
\nonumber\\
&&{}{}\times{\textstyle\prod_{\ell=1}^{M_n-1}} \EE\left[\re^{-s{U}_{{N}_n+\ell}}\big|{M}_n\right].
\label{20200301-1}
\end{eqnarray}

For $\ell=1,\dots,M_n-1$, the $(N_n+\ell)$-th arrival packet is not valid. 
We then have, for $\ell=1,\dots,M_n-1$,
\begin{eqnarray}
\EE\left[\re^{-s{U}_{{N}_n+\ell}}\big|{M}_n\right]
&=&\EE\left[\re^{-sX_\lambda}|X_\lambda \le X_\mu\right]
=
{\lambda+\mu\over s+\lambda+\mu}.\qquad
\label{eq:20202029-6}
\end{eqnarray}
Furthermore, it follows from the memoryless property that 
\begin{eqnarray}
\EE\left[\re^{- s{J}_n}\right]
&=&\EE\left[\re^{-s(X_\lambda-X_\mu)} \big|X_\lambda>X_\mu\right]
=
{\lambda\over s+\lambda}.
\label{eq:20202029-7}
\end{eqnarray}
Note that ${S}_{{N}_{n+1}}$ is the service time of a valid packet. Thus, it follows from Lemma~\ref{lem:S} that 
\begin{eqnarray}
\EE\left[\re^{-s{S}_{{N}_{n+1}}}\right]={\lambda+\mu\over s+\lambda+\mu}.
\label{eq:20200308-10}
\end{eqnarray}
Applying \eqref{eq:20202029-6}--\eqref{eq:20200308-10} to \eqref{20200301-1}, we obtain
\begin{eqnarray}
\EE\left[\re^{-s{R}_n} \big|M_n\right] 
&=&\left({\lambda+\mu\over s+\lambda+\mu}\right)^{{M}_n}{\lambda\over s+\lambda} .
\label{20200301-2}
\end{eqnarray}

As similar  to \eqref{20200301-2-2}, for any $n\in\bbN$, $M_n$ follows the geometric distribution on $\bbN$ with parameter $\mu/(\lambda+\mu)$. Thus, from \eqref{20200301-2}, we obtain
\begin{eqnarray*}
\EE[\re^{-s{R}_n}]
={\lambda\over s+ \lambda}{\mu\over s+\mu},
\end{eqnarray*}
which means that ${R}_n$ follows  the convolution of $X_\lambda$ and $X_\mu$.
\medskip

%============================================================
%============================================================
%============================================================

\end{document}